\newtheorem{theorem}{Theorem}
\begin{document}

\preprint{APS/123-QED}

\title{Minimal-Dissipation Learning for\\Energy-Based Models}

\author{Jeff~Hnybida}
 \email{Corresponding author: jeffrey.hnybida@irreversible.tech}
\author{Simon~Verret}%
\affiliation{%
 Irr\'eversible Inc., Sherbrooke, QC, Canada
}%

\date{\today}

\begin{abstract}
      We show that the bias of the approximate maximum-likelihood estimation (MLE) objective of a persistent chain energy-based model (EBM) is precisely equal to the thermodynamic excess work of an overdamped Langevin dynamical system. We then answer the question of whether such a model can be trained with minimal excess work, that is, energy dissipation, in a finite amount of time. We find that a Gaussian energy function with constant variance can be trained with minimal excess work by controlling only the learning rate.   This proves that it is possible to train a persistent chain EBM in a finite amount of time with minimal dissipation and also provides a lower bound on the energy required for the computation. We refer to such a learning process that minimizes the excess work as \emph{minimal-dissipation learning}. We then provide a generalization of the optimal learning rate schedule to general potentials and find that it induces a natural gradient flow on the MLE objective, a well-known second-order optimization method.
\end{abstract}

\maketitle


\section{Introduction}

Successful techniques in machine learning have often borrowed ideas from statistical physics and thermodynamics. Early examples include the Hopfield network~\cite{hopfield_Neural_1982} and the Boltzmann \mbox{machine~\cite{ackley_Learning_1985, hinton_Training_2002, salakhutdinov_Restricted_2007},} both of which are examples of a more general class of models called energy-based models (EBM)~\cite{lecun2006tutorial}. A more recent example of machine learning emulating statistical physics is denoising diffusion probabilistic models (DDPM)~\cite{sohl-dickstein_Deep_2015, ho_Denoising_2020, song_ScoreBased_2020}, which were inspired by the Jarzynski equality~\cite{jarzynski_Equilibrium_1997, jarzynski_Nonequilibrium_1997, neal_Annealed_2001}, and became the dominant approach in the generative modelling of images. Aside from providing mere inspiration, these analogies with statistical physics invite the calculation of thermodynamic quantities such as entropy production associated with these machine learning algorithms, which in turn bounds the energy required for computation, as in the case of Landauer's principle for digital computation~\cite{landauer1961irreversibility}.

To reach the energy-efficiency bounds for computation set by thermodynamics, it seems necessary to abandon the digital von Neumann architecture and to consider unconventional computers. There have been recent proposals~\cite{coles_Thermodynamic_2023, scellier_Energybased_2023} claiming that analog computers could run EBMs and DDPMs orders of magnitude more efficiently than traditional digital hardware. True analog random processes (i.e., those that are not merely simulated) would execute the  prohibitively computationally expensive digital Monte Carlo algorithms. The use of intrinsic random processes to simulate Langevin Monte Carlo algorithms is referred to as ``thermodynamic computing''~\cite{conte_Thermodynamic_2019, melanson_Thermodynamic_2025} and raises the question of what theoretical minimum energy is required to run these machine learning algorithms.

In this paper, we derive lower bounds on the energy required to train a particular class of generative models: persistent chain EBMs~\cite{du2019implicit, nijkamp2020anatomy}. These models served as a stepping stone towards the development of DDPMs. They use convolutional neural networks~\cite{lecun1989backpropagation} as their energy function and Langevin Monte Carlo techniques to estimate the training objective. The term ``persistent chain'' refers to the use of a replay buffer in which Monte Carlo samples are stored and updated throughout a training process. This important difference compared to other EBM training techniques~\cite{ackley_Learning_1985, hinton_Training_2002, salakhutdinov_Restricted_2007, nijkamp_Learning_2019, grathwohl2019your, agoritsas_Explaining_2023} makes persistent chain EBMs the only \emph{convergent} EBMs~\cite{nijkamp2020anatomy} in the sense that the samples generated by the EBM are distributed according to the Gibbs distribution defined by the energy function.

The first contribution of our work is in connecting the training process of persistent chain EBMs~\cite{nijkamp2020anatomy} to stochastic thermodynamics~\cite{sekimoto1997complementarity, sekimoto_Stochastic_2010, seifert2012stochastic, peliti_Stochastic_2021}, using an experimentally realizable harmonic trap~\cite{schmiedl2007optimal} as a guiding example. Several valuable insights in understanding machine learning algorithms~\cite{agoritsas_Explaining_2023} or developing new ones~\cite{carbone2024efficient, klinger_Universal_2024, premkumar_Neural_2024} were recently derived by drawing a connection with stochastic thermodynamics. In our work on persistent chain EBMs, we assume that both the learning updates and the Monte Carlo updates in the EBM training process are continuous in time and that sample averages are exact. Our goal is to obtain insight regarding the general learning dynamics of EBMs. We use the harmonic trap as a working example for every general result we present, and we extend various exact results from stochastic thermodynamics~\cite{schmiedl2007optimal, sivak2012thermodynamic} to the case of EBMs.

Another contribution we make is in identifying the bias in the maximum-likelihood estimation (MLE) training objective with the thermodynamic excess work. This bias is a result of running the Monte Carlo algorithm to approximate the MLE objective.  This association of the bias with the excess work is valid for persistent chain EBMs with arbitrary energy functions. Minimizing the bias thus corresponds to minimizing the excess work, the latter of which is also a sought after goal in stochastic thermodynamics~\cite{aurell2011optimal}.

In~\cite{salazar2017nonequilibrium} the authors give a stochastic thermodynamic interpretation of the unsupervised learning of Restricted Boltzmann Machines (RBM).  An RBM can be viewed as an EBM on a discrete space using contrastive divergence (CD) loss, which is a different approximation of the MLE loss than the one we consider.  Using this thermodynamic interpretation they show that the CD loss can be expressed in terms of thermodynamic quantities such as entropy and heat and they show that in the limit of many Gibbs sampling steps, the CD loss (which becomes the MLE loss in the limit) equals the excess work.  Since they assume that the initial and final states are in equilibrium, the excess work they consider is equal to the total entropy production.  This connection between the MLE loss and entropy production follows from the generalized De Brujin identity~\cite{wibisono2017information}.  In contrast, we identify the excess work with the bias of the approximate MLE loss, not with the loss itself, and the excess work is not assumed to be equal to the total entropy production.

Our third contribution is in developing a general framework for minimizing excess work through learning rate scheduling.  While optimal protocols (i.e., time parametrizations of model parameters which minimize excess work) for the harmonic trap are well-known~\cite{schmiedl2007optimal, sivak2012thermodynamic}, we derive learning rate schedules for the MLE gradient flow of EBM training for which parameters follow these optimal protocols. We refer to these learning rate schedules as \emph{minimal-dissipation learning}. Following Refs.~\cite{sivak2012thermodynamic, amari2016information}, we obtain a generalization of the optimal learning schedules found for the harmonic trap. We find that these optimal learning schedules induce a natural gradient flow on the MLE objective, which is a well-known second-order optimization method~\cite{martens2020new}.  This provides a new connection between stochastic thermodynamics and information geometry~\cite{amari2016information}. 

It should be noted that the terms ``excess work'' and ``efficiency'' as used in the context of this paper correspond to lower bounds for computational efficiency and reduced power consumption for ideal computers such as a hypothetical version of the analog hardware mentioned above. These theoretical lower bounds for energy consumption are analogous to the Landauer bound for digital computers~\cite{landauer1961irreversibility}, and more specifically the bounds for finite-time protocols~\cite{aurell2012refined}. Hence, the optimal protocols presented here are not expected to be optimal for a digital computer simulating a discretized Langevin system, as in that case additional overhead must be taken into account, but also complete information about the thermodynamic system is known by the simulation, which could be used to reduce the excess work potentially to zero~\cite{parrondo2015thermodynamics}. Nevertheless, our results are also relevant with respect to digital computers, since the learning trajectories obtained follow a natural gradient flow in the parameter space, which are known to be asymptotically Fisher efficient in the case of maximum-likelihood tasks~\cite{amari1998natural}, and could serve as a guide in the development of more-efficient EBM training techniques.

The paper is organized as follows. In Section~\ref{sec_ebm}, we review the theory of persistent chain EBMs and show how the harmonic trap from stochastic thermodynamics can be interpreted as one. In Section~\ref{sec_dissipation}, we show that the bias introduced by the approximate MLE objective used to train persistent chain EBMs can be identified with the excess work of the EBM training process. We then derive learning rate schedules that minimize excess work, in the case where the parameters are assumed to be continuous at both the beginning and the end of the training process, and also when the parameters are allowed to have discontinuities at these endpoints. In Section~\ref{sec_general}, we provide a generalization of the optimal learning rate schedules found for the harmonic trap, and prove that it induces a natural gradient flow.

\section{Energy-Based Models}
\label{sec_ebm}

An energy-based model (EBM) is defined by a real-valued function $E(\boldsymbol{x}, \boldsymbol{\theta})$, referred to as the \textit{energy} of a probabilistic state space. In this paper, we use boldface symbols to represent vectors (and matrices), where \mbox{$\boldsymbol{x} = (x_1, x_2, ..., x_d)^T$} is an element of a given sample space, and $\boldsymbol{\theta} = (\theta_1, \theta_2, ..., \theta_n)^T$ is an element of a set of real-valued model parameters.

The energy $E(\boldsymbol{x}, \boldsymbol{\theta})$ represents the probabilistic likelihood of $\boldsymbol{x}$ given $\boldsymbol{\theta}$, where a lower energy value corresponds to a higher likelihood. This defines a parametric probability distribution known as the Boltzmann distribution
\begin{equation}
\label{eqn_boltzmann}
  p_\text{m}(\boldsymbol{x}, \boldsymbol{\theta}) \equiv \frac{e^{-\beta E(\boldsymbol{x}, \,\boldsymbol{\theta})}}{Z(\boldsymbol{\theta})},
\end{equation}
where the subscript ``m'' stands for ``model'', $\beta$ is the inverse temperature, and
\begin{equation}
\label{eqn_partition_function}
    Z(\boldsymbol{\theta}) = \int e^{-\beta E(\boldsymbol{x}, \boldsymbol{\theta})} \, d\boldsymbol{x}
\end{equation} 
is a generally intractable normalization factor known as the \textit{partition function}. For notational simplicity, the dependence of all functions on $\beta$ is left implicit here and throughout the manuscript.

In principle, an EBM can be used to model any arbitrary distribution by applying maximum-likelihood estimation (MLE). This can be done via a gradient flow \cite{villani2009optimal} given by
\begin{equation}
\label{eqn_MLE_descent}
  \frac{d\boldsymbol{\theta}}{dt} = - \eta \nabla_{\boldsymbol{\theta}} \mathcal{L}_\text{MLE}(\boldsymbol{\theta}),
\end{equation}
given the negative log-likelihood $\mathcal{L}_\text{MLE}$ of a set of parameters $\boldsymbol{\theta}$, some learning rate $\eta$, and $\nabla_{\boldsymbol{\theta}} \equiv (\partial/\partial_{\theta_1}, ..., \partial/\partial_{\theta_n})$.

Now, suppose we want to model a distribution of data $p_\text{d}$, from which we assume we are able to generate independent and identically distributed (IID) samples. Then, the negative log-likelihood of the EBM's parameters $\boldsymbol{\theta}$ can simply be expressed as
\begin{equation}
\label{eqn_MLE_log_likelihood}
   \mathcal{L}_\text{MLE}(\boldsymbol{\theta}) \equiv \big< -\beta^{-1}\log p_\text{m} \big>_{p_\text{d}},
\end{equation}
where $\big< f \big>_{p} \equiv \int p(x) f(x) dx$ denotes the expectation of a function $f$ with respect to a probability distribution $p$, where $dx$ is the Lebesgue measure on the corresponding sample space. Unfortunately, Eq.~(\ref{eqn_MLE_log_likelihood}) is intractable due to the presence of the partition function $Z(\boldsymbol{\theta})$ in $\log p_\text{m}$. Nevertheless, to apply MLE, we require only the gradient
\begin{equation}
\label{eqn_MLE_gradient}
  \nabla_{\boldsymbol{\theta}} \mathcal{L}_\text{MLE}(\boldsymbol{\theta}) = \big< \nabla_{\boldsymbol{\theta}} E \big>_{p_\text{d}} - \big< \nabla_{\boldsymbol{\theta}} E \big>_{p_\text{m}},
\end{equation}
which can be estimated using the sample average approximation, assuming we are able to generate IID samples from both $p_\text{d}$ and $p_\text{m}$.

\subsection{Markov Chain Monte Carlo Methods}

Sampling from $p_\text{m}$ can usually only be approximated using Markov chain Monte Carlo (MCMC) methods.  For these algorithms, a set of samples are drawn from a prior distribution and evolved according to a Markov chain update rule. This results in a time-dependent \textit{sample distribution} $p_\text{s}$ which converges to $p_\text{m}$ as the number of iterations of the Markov chains approaches infinity. For any amount of finite time, however, the MCMC sample distribution does not match the model distribution:
\begin{equation}
p_\text{s} \neq p_\text{m}.
\end{equation}

The most common MCMC method used for EBMs is the Langevin Monte Carlo method, which is a discretization of Langevin dynamics given by
\begin{equation}
\label{eqn_langevin}
   \frac{d\boldsymbol{x}}{dt} = -\mu \nabla_{\boldsymbol{x}} E(\boldsymbol{x}, \boldsymbol{\theta}) + \boldsymbol{\xi},
\end{equation}
where $\mu$ represents the \textit{mobility} of the Langevin process and $\boldsymbol{\xi}$ is Brownian white noise with a mean of zero and a variance of $2 \mu \beta^{-1}$.

Samples from $p_\text{m}$ are thus generated by initializing samples from a prior distribution and evolving them according to Eq.~(\ref{eqn_langevin}) for an infinite amount of time.

Since, in practice, samples can only be generated by evolving a sample distribution for a finite amount of time, the MLE gradient used to train an EBM is instead approximated by
\begin{equation}
\label{eqn_approx_gradient}
  \nabla_{\boldsymbol{\theta}} \mathcal{L}_\text{approx}(\boldsymbol{\theta}) \equiv  \big< \nabla_{\boldsymbol{\theta}} E \big>_{p_\text{d}} - \big< \nabla_{\boldsymbol{\theta}} E \big>_{p_\text{s}},
\end{equation}
where the expectation over $p_\text{m}$ in Eq.~(\ref{eqn_MLE_gradient}) is replaced by an expectation over $p_\text{s}$. In doing so, the approximate training method is defined as
\begin{equation}
\label{eqn_approx_descent}
  \frac{d\boldsymbol{\theta}}{dt} = - \eta \nabla_{\boldsymbol{\theta}} \mathcal{L}_\text{approx}(\boldsymbol{\theta}).
\end{equation}
This method is practical for some real-world applications, but is extremely computationally expensive due to the long evolution times (corresponding to long MCMC chains) required to approximate Eq.~(\ref{eqn_langevin}) effectively.

One method of reducing the computation required for sampling from an EBM, which has been proven successful, is to run very short MCMC chains~\cite{hinton_Training_2002, salakhutdinov_Restricted_2007}, and periodically re-initialize the samples from the prior distribution or the distribution of the dataset~\cite{du2019implicit, grathwohl2019your, nijkamp_Learning_2019}. However, it has been shown~\cite{nijkamp2020anatomy} that samples obtained from these models do not converge to the data distribution in the limit of a large number of MCMC iterations and hence are not true EBMs of the data. Indeed, the model $p_\text{m}$, which is the equilibrium distribution of the Langevin dynamics, does not match the data distribution, that is, $p_\text{m}\neq p_\text{d}$.

For this reason, the focus of this paper is on persistent chain EBMs~\cite{nijkamp2020anatomy}, for which the sample distribution is initialized only once at the start of training, stored in a replay buffer, and updated throughout training. More specifically, we study the dynamics defined by Eqs.~(\ref{eqn_langevin}) and (\ref{eqn_approx_descent}). In simple terms, the energy model regresses to the data due to the MLE update given by Eq.~(\ref{eqn_approx_descent}), while the MCMC samples relax to the model distribution due to Langevin dynamics given by Eq.~(\ref{eqn_langevin}). Note that the timescale associated with Eq.~(\ref{eqn_approx_descent}) is implicitly chosen to coincide with that of Eq.~(\ref{eqn_langevin}), however there is a time rescaling invariance.  If one regards the timescale set by $\mu$ to be fundamental, for example as determined by a physical system, then $\eta$ can be regarded as a scale factor in which only the ratio $\eta/\mu$ is physically meaningful.

We study the continuous-time dynamics of EBMs, in which the model and the samples are updated continually, in contrast to discrete-time dynamics, in which the model and samples are updated one after another. Nevertheless, the training process ends when $p_\text{m}=p_\text{d}$, however this is a simplifying assumption and not necessarily true for EBM training in general. Note that this does not strictly require that $p_\text{s}=p_\text{m}=p_\text{d}$, as we demonstrate below in the case of minimal-dissipation learning.

\subsection{The Harmonic Trap as an Energy-Based Model}
\label{sec_harmonic_trap_ebm}

Here, we show how a one-dimensional, overdamped Langevin system, coupled to a time-dependent \textit{harmonic trap} can be viewed as a persistent chain EBM. The harmonic trap is an often-used, experimentally realizable example from stochastic thermodynamics~\cite{schmiedl2007optimal} which we will revisit several times in our analysis. The energy function of a harmonic trap is quadratic, which we express as
\begin{equation}
\label{eqn_harmonic_trap_energy}
  E(x, \, \theta) = \frac{1}{2}(x - \theta)^2,
\end{equation}
so $p_\text{m}$ is a normal distribution with a mean of $\theta$ and a constant variance of $\beta^{-1}$.

Suppose the data distribution $p_\text{d}$ has a centre of mass at $\theta^\ast = \langle x \rangle_{p_\text{d}}$. For simplicity, we can assume that
\begin{equation}
  p_\text{d}(x) \equiv p_\text{m}(x, \theta^\ast).
\end{equation}
Defining $u \equiv \langle x \rangle_{p_\text{s}}$, that is, the centre of mass of $p_\text{s}$, we can use the Langevin equation~(\ref{eqn_langevin}) to derive the following equation of motion for $u$:
\begin{equation}
\label{eqn_u_eom}
   \frac{du}{dt} = -\mu(u - \theta).
\end{equation}
Furthermore, Eq.~(\ref{eqn_approx_descent}) provides the equation of motion for $\theta$
\begin{equation}
\label{eqn_harmonic_trap_MLE}
   \frac{d\theta}{dt} = -\eta(u - \theta^\ast),
\end{equation}
and Eqs.~(\ref{eqn_u_eom}) and~(\ref{eqn_harmonic_trap_MLE}) form a closed system that can be solved exactly.

Note that, in the above equations of motion, we have used $\langle \nabla_\theta E \rangle_{p_\text{d}} = \theta - \theta^\ast$, that is, this expectation value is exact. In practice, some noise will exist in the expectation value due to the sample sizes being finite; we thus ignore the noise in the interest of simplicity.

In this simple example, $\theta^\ast$ can be computed directly from $\langle x \rangle_{p_\text{d}}$, but the goal of considering this harmonic trap as an EBM is not to determine the value of $\theta$; rather, it is to gain insight regarding the general learning dynamics which would apply to more-complicated energy functions, for which performing this type of direct calculation of $\theta$ would be intractable.

The initial state of $p_\text{s}$ is chosen to be a normal distribution centred at some value $u_0$, and the model parameter is initialized to some value $\theta_0$.

Figure~\ref{fig_harmonic_trap_constant_learning_rate} shows a ridge plot generated based on the evolution of Eqs.~(\ref{eqn_langevin}) and~(\ref{eqn_harmonic_trap_MLE}) for specific parameter values (refer to the caption for the values). Notice that the paths of the centres of mass of both the sample distribution $p_\text{s}$ and the model distribution $p_\text{m}$ oscillate. Similar oscillations exist in persistent chain EBM training for images~\cite{nijkamp2020anatomy} and are believed to be beneficial for the training. We argue that oscillations make the learning process inefficient. In the section that follows, we quantify this inefficiency using methods from stochastic thermodynamics.
\begin{figure}[h]
  \includegraphics[width=0.48\textwidth]{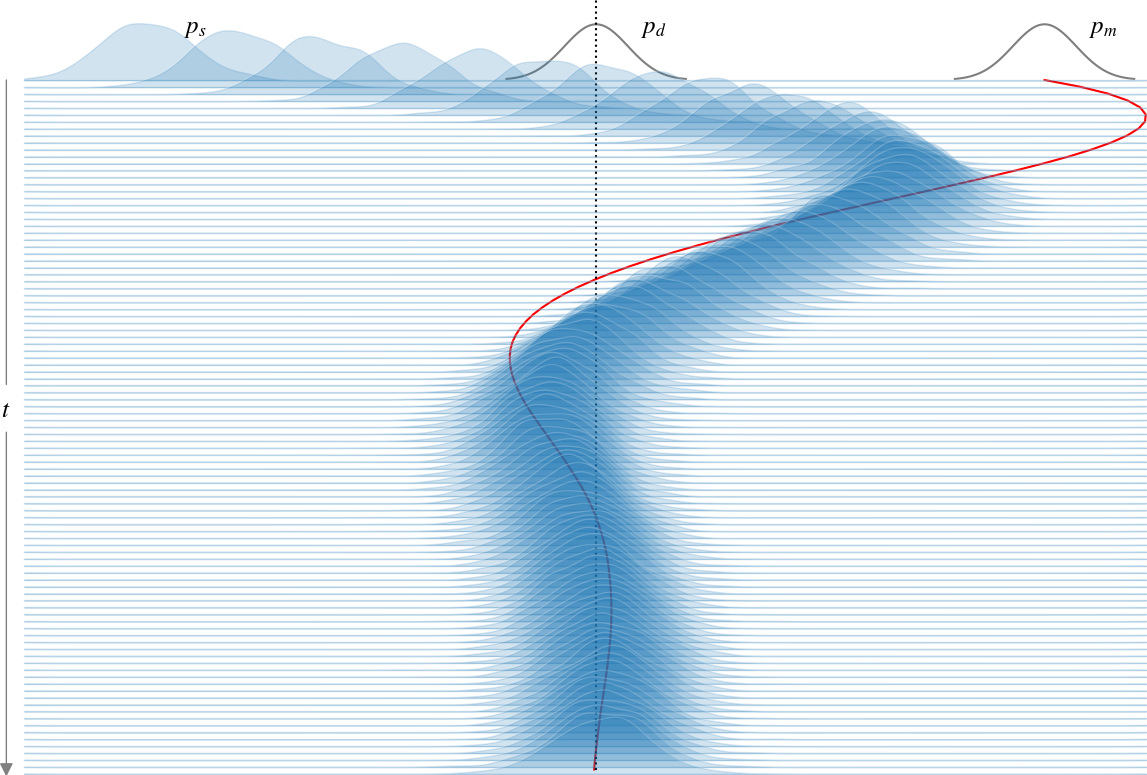}
\caption{The harmonic trap EBM is trained with a constant learning rate using the approximate MLE objective. The sample distribution, $p_\text{s}$, shown in blue using a ridge plot, evolves according to Langevin dynamics given by Eq.~(\ref{eqn_langevin}), whereas the model distribution $p_\text{m}$ having centre of mass $\theta$, shown by the red curve, is updated according to the MLE gradient flow described by Eq.~(\ref{eqn_harmonic_trap_MLE}) with a constant learning rate $\eta=10$. The data distribution $p_\text{d}$ is stationary, centred at $\theta^\ast$, as indicated by the vertical dotted line. The parameters used are $\theta_0=10$, $\theta^\ast=0$, $u_0=-10$, $\mu=10$, $\eta=10$, $\beta=1$, and $\tau=1$.}
\label{fig_harmonic_trap_constant_learning_rate}
\end{figure}
%


\section{Minimal-Dissipation Learning}
\label{sec_dissipation}

The approximation given by Eq.~(\ref{eqn_approx_gradient}) of the true MLE gradient in Eq.~(\ref{eqn_MLE_gradient}) is biased by the history of the evolution of the sample distribution. This bias can be mitigated by increasing the total time of the process, but doing so is undesirable in practice, for example, increasing the number of steps in a MCMC process necessitates additional computation to be performed.
  
Defining the bias as
\begin{equation}
\label{eqn_bias_MLE_approx}
  \mathcal{L}_\text{bias}(\boldsymbol{\theta}) \equiv \mathcal{L}_\text{MLE}(\boldsymbol{\theta}) - \mathcal{L}_\text{approx}(\boldsymbol{\theta}),
\end{equation}
and using Eqs.~(\ref{eqn_MLE_gradient}) and~(\ref{eqn_approx_gradient}), we find that
\begin{equation}
  \nabla_{\boldsymbol{\theta}} \mathcal{L}_\text{bias}(\boldsymbol{\theta}) =  \big< \nabla_{\boldsymbol{\theta}} E \big>_{p_\text{s}} - \big< \nabla_{\boldsymbol{\theta}} E \big>_{p_\text{m}}.
  \label{eqn_bias_def}
\end{equation}
The first term on the right-hand side can be identified with the average power output by a stochastic thermodynamic system which is defined by~\cite{sekimoto1998langevin}
\begin{equation}
  \frac{d \langle W \rangle_{p_\text{s}}}{dt} \equiv \big< \nabla_{\boldsymbol{\theta}} E \big>_{p_\text{s}} \cdot \frac{d\boldsymbol{\theta}}{dt},
  \label{eqn_work_def}
\end{equation}
since the persistent chain EBM is a Langevin dynamical system which satisfies energy conservation.  The second term on the right-hand side of Eq.~(\ref{eqn_bias_def}) corresponds to the gradient of the equilibrium free energy, which is defined by
\begin{equation}
\label{eqn_equilibrium_free_energy}
  F_\text{eq} \equiv -\beta^{-1} \log Z(\beta, \boldsymbol{\theta}).
\end{equation}
Substituting Eqs.~(\ref{eqn_work_def}) and (\ref{eqn_equilibrium_free_energy}) into Eq.~(\ref{eqn_bias_def}), we obtain
\begin{align}
   \frac{d\mathcal{L}_\text{bias}}{dt}
   &= \left(\big< \nabla_{\boldsymbol{\theta}} E \big>_{p_\text{s}} - \big< \nabla_{\boldsymbol{\theta}} E \big>_{p_\text{m}} \right) \cdot \frac{d{\boldsymbol{\theta}}}{dt} \label{eqn_bias_grad_E}, \\
   &= \frac{d \langle W \rangle_{p_\text{s}}}{dt} -  \frac{dF_\text{eq}}{dt}, \\
   &= \frac{d}{dt} \langle W_\text{ex} \rangle_{p_\text{s}}, \label{eqn_L_bias}
\end{align}
where $W_\text{ex}$ represents the \textit{excess work}
\begin{equation}
   \langle W_\text{ex} \rangle_{p_\text{s}} \equiv  \langle W \rangle_{p_\text{s}} - \Delta F_\text{eq}. 
\end{equation}
Identifying the bias of the approximate MLE objective to the thermodynamic quantity of excess work (up to a time independent constant) in this way is the first contribution of our research.

The change in equilibrium free energy $\Delta F_\text{eq}$ corresponds to the minimal amount of work required to transform a thermodynamic system between specified initial and final equilibrium states. The excess work, being the work in excess of this minimal amount, is therefore a measure of the total energy dissipation of the process.

It is evident from Eq.~(\ref{eqn_bias_grad_E}) that, if $p_\text{s} \approx p_\text{m}$, then the excess work and hence the bias can be minimized. This is what is known as a \textit{quasi-static process}, which can be achieved by initializing the system in equilibrium and updating the parameters $\theta(t)$ sufficiently slowly to ensure that $p_\text{s}$ evolves faster than $p_\text{m}$, thereby ensuring the system always remains close to equilibrium. 

Figure~\ref{fig_harmonic_trap_quasi_static} provides a visualization of the training process of the harmonic trap EBM in the quasi-static limit, that is, with a constant learning rate and $\eta \ll \mu$.  This is, in fact, the trajectory for the exact learning objective $\mathcal{L}_\text{MLE}$.

However, a quasi-static process can typically only produce zero excess work in the limit of infinite time, that is, if it is a reversible process. To minimize the excess work in a finite amount of time, we require a time-dependent protocol for $p_\text{m}$. In our analysis, we take the learning rate $\eta$ in Eq.~(\ref{eqn_MLE_descent}) to be the only controllable parameter so that the MLE objective is not altered.
\begin{figure}
  \includegraphics[width=0.48\textwidth]{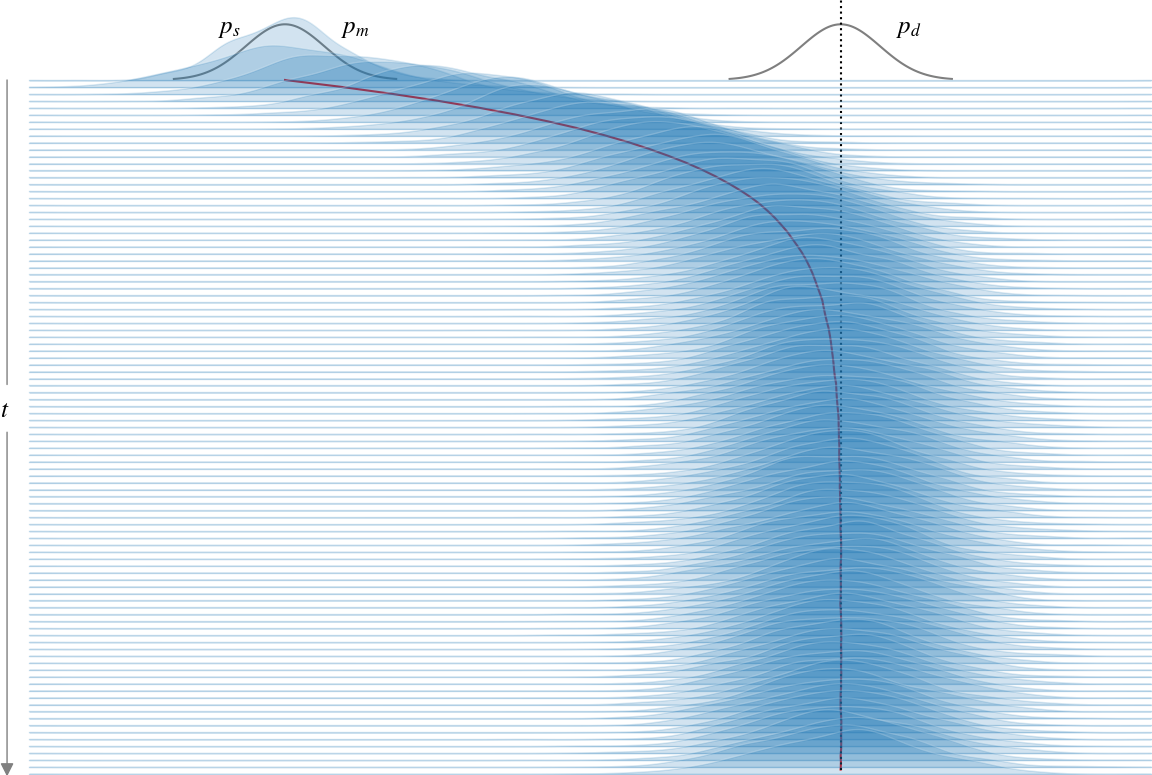}
\caption{The harmonic trap EBM is trained using the approximate MLE objective with a quasi-static protocol, i.e., with a constant learning rate $\eta \ll \mu$ so that $p_\text{s} \approx p_\text{m}$ at all times. The parameters chosen are $\theta_0=-10$, $\theta^\ast=0$, $u_0=-10$, $\mu=100$, $\eta=10$, $\beta=1$, and $\tau=1$.}
\label{fig_harmonic_trap_quasi_static}
\end{figure}
\subsection{Minimizing Dissipation for the Harmonic Trap}

The excess work of a one-dimensional harmonic trap, having either a fixed mean or fixed variance, was first derived in Ref.~\cite{schmiedl2007optimal}. In Section~\ref{sec_harmonic_trap_ebm} we recast this system as an EBM, raising the question of whether it is possible to learn the unknown parameter $\theta^\ast$ while controlling only the learning rate.

One important difference between our EBM and the harmonic trap in Ref.~\cite{schmiedl2007optimal} is that the sample distribution employed therein is always initialized in equilibrium, that is, $u_0 = \theta_0$, whereas we do not restrict the value of $u_0$ because MCMC samples of EBM training are generally not initialized in equilibrium. 

The excess work corresponding to the energy function~(\ref{eqn_harmonic_trap_energy}) has the form 
\begin{align}
  \langle W_\text{ex} \rangle_{p_\text{s}} 
  &= \int_{0}^{\tau} \frac{d\langle W_\text{ex} \rangle_{p_\text{s}}}{dt} dt, \\
  &= \int_{0}^{\tau}  \left(\big< \nabla_\theta E \big>_{p_\text{s}} - \big< \nabla_\theta E \big>_{p_\text{m}} \right) \dot{\theta} dt, \\
  &= \int_{0}^{\tau} (\theta - u) \dot{\theta} dt \label{eqn_excess_work_theta_minus_u},\\
  &= \int_{0}^{\tau} \frac{\dot{u}}{\mu} \left(\frac{\ddot{u}}{\mu} + \dot{u} \right) dt,
  \label{eqn_excess_work_all_u} \\
  &= \frac{1}{2}\left(\frac{\dot{u}}{\mu}\right)^2\bigg|_{0}^{\tau} +  \frac{1}{\mu} \int_{0}^{\tau} \dot{u}^2 dt, \label{eqn_excess_work_integral}
\end{align}
where Eq.~(\ref{eqn_u_eom}) is used to derive Eq.~(\ref{eqn_excess_work_all_u}) from Eq.~(\ref{eqn_excess_work_theta_minus_u}).

\subsubsection{Continuous Protocols}

To minimize excess work, we consider the second term of Eq.~(\ref{eqn_excess_work_integral}) as a functional of $u$ and $\dot{u}$, which can be minimized using the \mbox{Euler--Lagrange} equation:
\begin{equation}
\label{eqn_u_ddot_equals_zero}
  \frac{d^2u}{dt^2} = 0.
\end{equation}
Thus, excess work is minimized when $u$, the centre of mass of the sample distribution, moves at a constant speed 
\begin{equation}
\label{eqn_u_of_t}
  u(t) = u_0 + mt
\end{equation}
for some constant $m$.  For the protocol Eq.~(\ref{eqn_u_of_t}), the first term on the right-hand side of Eq.~(\ref{eqn_excess_work_integral}) vanishes; therefore, the excess work produced by this protocol is minimized, assuming the parameter $\theta(t)$ changes continuously.

Solving Eq.~(\ref{eqn_u_eom}) for $\theta$ and substituting $\dot{u} = m$, it follows that $\theta(t)$ also moves at the same constant speed:
\begin{equation}
\label{eqn_theta_of_t}
   \theta(t) = \frac{m}{\mu} + u_0 + mt.
\end{equation}
Note that Eq.~(\ref{eqn_u_eom}) becomes $m=\mu(\theta_0 - u_0)$, so if $p_\text{s}$ is initially in equilibrium, that is, $u_0 = \theta_0$, then $m=0$, implying that $p_\text{s}$ and $p_\text{m}$ do not move. Therefore, we need to assume $u_0 \neq \theta_0$, which is standard practice in EBM training, where the model parameters are not usually initialized in equilibrium.

Determining whether it is possible to train the harmonic trap EBM with minimal excess work amounts to finding a time-dependent learning rate schedule $\eta(t)$ which induces the dynamics expressed by Eqs.~(\ref{eqn_u_of_t}) and~(\ref{eqn_theta_of_t}). To find such an optimal learning rate schedule, we substitute Eqs.~(\ref{eqn_u_of_t}) and~(\ref{eqn_theta_of_t}) into Eq.~(\ref{eqn_harmonic_trap_MLE}) and use $\theta^\ast = \theta(\tau)$, which yields the learning rate schedule
\begin{equation}
\label{eqn_optimal_learning_rate}
  \eta(t) = \frac{1}{\tau - t + 1/\mu}.
\end{equation}
This equation represents the time dependence that the learning rate must obey in order to minimize the excess work. The initial condition $\theta(0) = \theta_0$ implies that
\begin{equation}
\label{eqn_mu_tau_condition}
    \mu \tau = \frac{\theta^\ast - \theta_0}{\theta_0 - u_0}.
\end{equation}
This condition has two important consequences. First, since $\mu$ and $\tau$ are strictly positive, there are only two initial parameter orderings which admit solutions: $\theta^\ast < \theta_0 < u_0$ and $u_0 < \theta_0 < \theta^\ast$.  The other four possible orderings have no solution.  
\begin{figure}
  \includegraphics[width=0.48\textwidth]{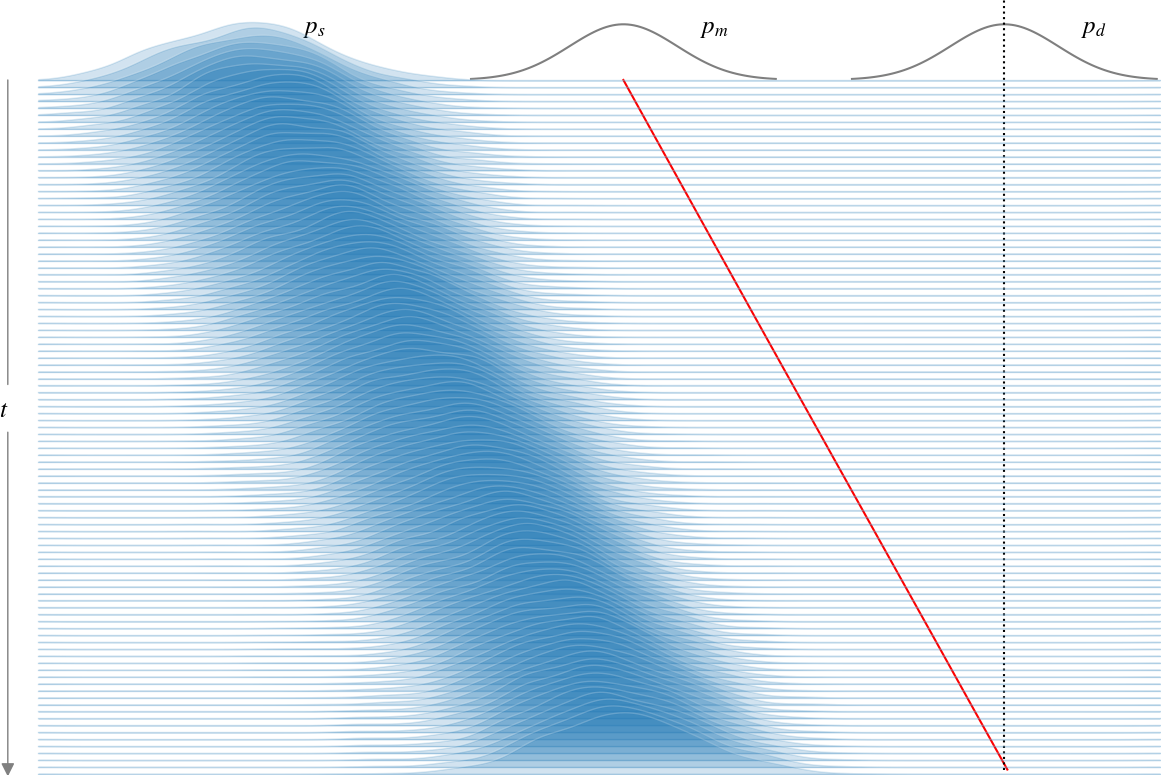}
\caption{The harmonic trap EBM is trained with minimal entropy production using a continuous protocol, i.e., $\theta(t)$ (represented by a red curve) is continuous.  However, this method does not strictly learn the value of $\theta^\ast$ (dotted line), as the value of $\theta^\ast$ must be known a priori to prepare $p_\text{s}$ and $p_\text{m}$ with the specific initial conditions required to minimize excess work. The parameters chosen are $\theta_0=-5$, $\theta^\ast=0$, $u_0=-10$, $\mu=1$, $\beta=1$, and $\tau=1$.}
\label{fig_harmonic_trap_optimal}
\end{figure}
The second consequence of Eq.~(\ref{eqn_mu_tau_condition}) is that, while the excess work is minimized, the parameter $\theta^\ast$ is not strictly learned, since knowledge of its value is required to set $u_0$ and $\theta_0$. In the section that follows, we show that, by allowing for discontinuities in the training protocol, it is actually possible to learn the value of $\theta^\ast$ with minimal excess work.

Figure~\ref{fig_harmonic_trap_optimal} illustrates the effect of using the learning rate schedule~(\ref{eqn_optimal_learning_rate}) with the admissible initial conditions determined by the Eq.~(\ref{eqn_mu_tau_condition}): the distributions move in straight lines and the model parameter stops precisely at $\theta^\ast$.  

\subsubsection{Discontinuous Protocols}

In Ref.~\cite{schmiedl2007optimal}, it was shown that by allowing the protocol $\theta(t)$ to be discontinuous at its endpoints, the excess work can be minimized further than in the case of its analogous continuous protocol.

For a discontinuous protocol, $\theta(t)$ is a piecewise-defined function for which $\theta(0)=\theta_0 \neq \frac{m}{\mu}+u_0$ and $\theta(\tau)=\theta^\ast \neq \frac{m}{\mu}+u_0+m\tau$, while Eq.~(\ref{eqn_theta_of_t}) still holds for $0<t<\tau$.  Consequently, the relation $\dot{u}=m$ holds within the integration interval of Eq.~(\ref{eqn_excess_work_integral}), namely for $0<t<\tau$, but not at the endpoints, where Eq.~(\ref{eqn_u_eom}) should be used instead.  The excess work is thus generalized to
\begin{equation}
\label{eqn_W_ex_discontinuous}
  \langle W_\text{ex} \rangle_{p_\text{s}} = \frac{m^2 \tau}{\mu} + \frac{1}{2} \left[\left(\theta^\ast - u_0 - m\tau\right)^2 - \left(\theta_0 - u_0\right)^2 \right].
\end{equation}
Varying this expression with respect to $m$, its optimal value  is determined to be
\begin{equation}
\label{eqn_m_opt}
    m_\text{opt} = \frac{\theta^\ast - u_0}{\tau + 2/\mu}.
\end{equation}
Following calculations similar to those that lead to Eq.~(\ref{eqn_optimal_learning_rate}), along with $m_\text{opt}$, results in a slightly different optimal learning rate schedule which is valid for $0 < t < \tau$:
\begin{equation}
\label{eqn_learning_rate_discontinuous}
    \eta_\text{opt}(t) = \frac{1}{\tau - t + 2/\mu}.
\end{equation}

At $t=0$ and $t=\tau$, impulses in $\eta(t)$ are required in order to make the jumps in $\theta$. According to Eq.~(\ref{eqn_harmonic_trap_MLE}), the impulse at $t=0$ must have the magnitude
\begin{align}
  \eta_0(t) 
  &= -\frac{\theta(0^+) - \theta_0}{\partial_\theta \mathcal{L}_\text{approx}(\theta)|_{t=0}} \delta(t), \\
  &= \left(\frac{u_0 - \theta_0}{\theta^\ast - u_0} + \frac{1}{\mu \tau + 2}\right) \delta(t),
\label{eqn_impulse_start}
\end{align}
using $\partial_\theta \mathcal{L}_\text{approx}(\theta) = u - \theta^\ast$, $\theta(0^+) = m_\text{opt}/\mu + u_0$, and where $\delta(t)$ denotes the Dirac delta distribution.

While knowledge of $\theta^\ast$ is required to compute this impulse for general initial conditions, notice that it is not required if the sample distribution is initialized in equilibrium, that is $u_0 = \theta_0$.

The impulse at $t=\tau$ is found similarly:
\begin{align}
  \eta_\tau(t) 
  &= -\frac{\theta^\ast - \theta(\tau^-)}{\partial_\theta \mathcal{L}_\text{approx}(\theta)|_{t=\tau}} \delta(t - \tau), \\
  &= \frac{1}{2} \delta(t - \tau).
\label{eqn_impulse_end}
\end{align}
In Figure~\ref{fig_harmonic_trap_optimal_jumps}, the harmonic trap EBM is initialized in equilibrium, that is, $u_0 = \theta_0$, and trained using the learning rate schedule in Eq.~(\ref{eqn_learning_rate_discontinuous}) with the impulses defined by Eqs.~(\ref{eqn_impulse_start}) and~(\ref{eqn_impulse_end}) at the initial and final points of the schedule. 
\begin{figure}[h]
  \includegraphics[width=0.48\textwidth]{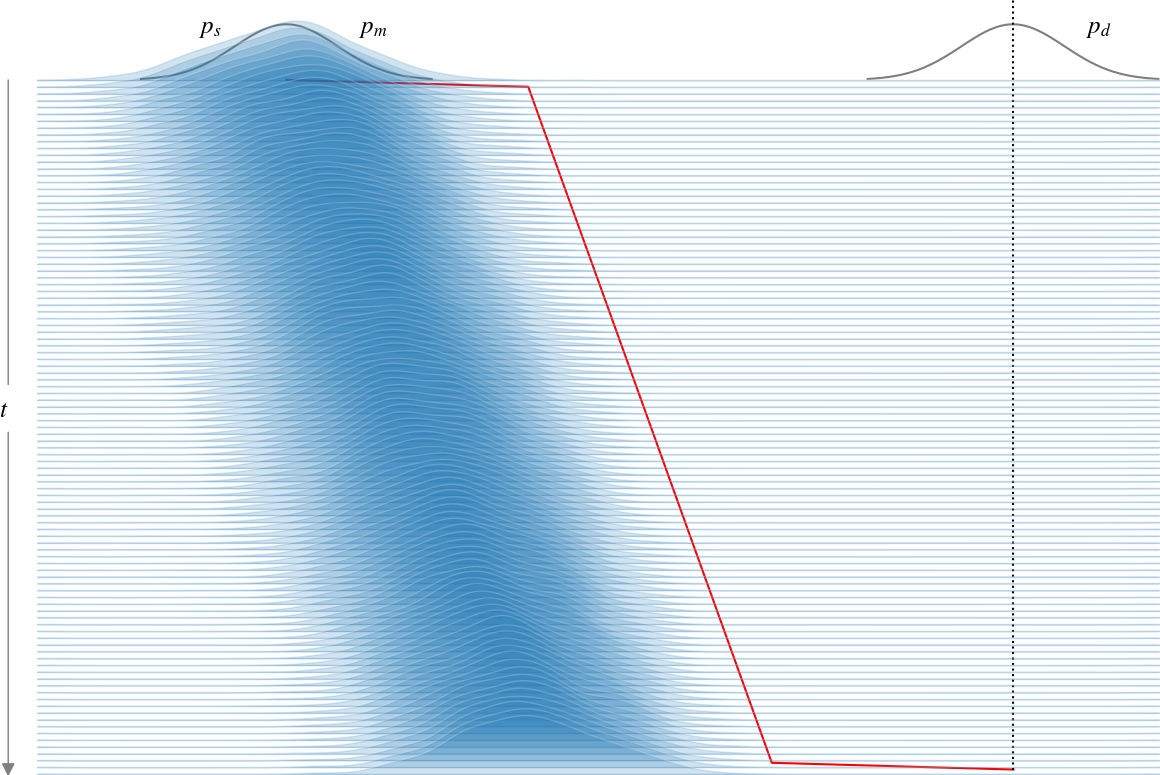}
\caption{The harmonic trap EBM is trained with minimal excess work, using a discontinuous protocol, i.e., the value of $\theta(t)$ (represented by the red curve) has jumps at the endpoints of the schedule. By initializing the system in equilibrium, i.e., $u_0 = \theta_0$, the value of the parameter $\theta^\ast$ (dotted line) is not needed to be known a priori, and so it is genuinely learned.  The parameters chosen are $\theta_0=-10$, $\theta^\ast=0$, $u_0=-10$, $\mu=1$, $\beta=1$, and $\tau=1$.}
\label{fig_harmonic_trap_optimal_jumps}
\end{figure}

If the harmonic trap is initialized in equilibrium, and trained using the discontinuous protocol, then knowing the value of $\theta^\ast$ is not required a priori. This shows that it is possible to learn an unknown target distribution using a persistent chain EBM with minimal excess work.

\subsubsection{Quasi-static Protocols}

In this section, the total excess work for the optimal learning rate schedule is compared to that of a quasi-static process, for which $p_\text{s} \approx p_\text{m}$ for all times, but run for the same finite time $\tau$. We show that, even though the excess work for the quasi-static process approaches zero in the limit of infinite time, the optimal protocol will produce less excess work in a finite time, under the requirement that the quasi-static process is close to convergence after the training process has completed.

We consider a harmonic trap EBM with constant learning rate $\eta \ll \mu$ as an example of a {\textit quasi-static EBM}, that is, one for which the training process is quasi-static. To calculate $W_\text{ex}$ for the harmonic trap EBM with constant learning rate $\eta$, Eqs.~(\ref{eqn_u_eom}) and~(\ref{eqn_harmonic_trap_MLE}) are substituted into Eq.~(\ref{eqn_excess_work_theta_minus_u}) to obtain
\begin{align}
  \langle W_\text{ex} \rangle_{p_\text{s}} 
  &= \int_{0}^{\tau} (\theta - u) \dot{\theta} dt, \\
  &= \int_{0}^{\tau} \frac{-\ddot{\theta} \dot{\theta}}{\eta\mu} dt, \\
  &= -\frac{\dot{\theta}^2}{2\eta \mu}  \bigg|_{0}^{\tau}, \\
  &= -\frac{\eta}{2\mu} (u - \theta^\ast)^2 \bigg|_{0}^{\tau}, \\
  &= \frac{\eta}{2\mu}\left[(u_0 - \theta^\ast)^2 - (u(\tau) - \theta^\ast)^2 \right]. \label{eqn_W_ex_const_eta}
\end{align}  
To calculate $u(\tau)$, we eliminate $\theta$ from Eqs.~(\ref{eqn_u_eom}) and~(\ref{eqn_harmonic_trap_MLE}) to obtain the following second-order ordinary differential equation with constant coefficients:
\begin{equation}
    \ddot{u} + \mu \dot{u} + \eta \mu (u - \theta^\ast) = 0.
\end{equation}
The solution to this equation in the limit $\eta \ll \mu$ is given by
\begin{equation}
    u(t) = \theta^\ast - (\theta_0 - \theta^\ast) e^{-\eta t}.
    \label{eqn_theta_mu_large}
\end{equation}
Finally, substituting Eq.~(\ref{eqn_theta_mu_large}) into Eq~(\ref{eqn_W_ex_const_eta}), we obtain the excess work for the quasi-static harmonic trap:
\begin{equation}
  \langle W_\text{ex}^\text{qs} \rangle_{p_\text{s}} = \frac{\eta}{2\mu}\left[(u_0 - \theta^\ast)^2 - (\theta_0 - \theta^\ast)^2 e^{-2\eta \tau} \right].
\end{equation}
We compare $\langle W_\text{ex}^\text{qs} \rangle_{p_\text{s}}$ with the excess work produced by the optimal finite time protocol, which can be found by substituting Eq.~(\ref{eqn_m_opt}) into Eq.~(\ref{eqn_W_ex_discontinuous}):
\begin{equation}
  \langle W_\text{ex}^\text{opt} \rangle_{p_\text{s}} =  \frac{(\theta_0 - \theta^\ast)^2}{\mu\tau + 2}.
\end{equation}
Taking the ratio of these and assuming $u_0 = \theta_0$,
\begin{equation}
    \frac{\langle W_\text{ex}^\text{qs} \rangle}{\langle W_\text{ex}^\text{opt} \rangle} = \left(\frac{1}{2} + \frac{1}{\mu\tau} \right) \eta\tau \left(1 - e^{-2 \eta \tau}\right).
    \label{eqn_qs_opt_ratio}
\end{equation}
If this ratio is greater than one, then the excess work for this quasi-static EBM is greater than that of the optimal protocol. This will be the case if $\eta \tau$ is not small, more specifically $\eta \tau > 2.24$ for all $\mu \tau > 0$. Note that Eq.~(\ref{eqn_theta_mu_large}) implies that $e^{-\eta t}$ is a measure of how close $u(t)$ is to convergence. Therefore, if we require this quasi-static EBM to converge, that is, $e^{-\eta \tau} < \epsilon$ for some $\epsilon > 0$, then $\eta \tau > 2.24$ implies $\epsilon < 0.10$.  In other words, for $\epsilon < 0.10$, which we take to be a reasonable definition of convergence, the optimal protocol will produce less excess work than this quasi-static EBM.

In summary, a quasi-static process that is close to convergence will produce more excess work than the optimal protocol. Note, in particular, that an EBM trained using the exact MLE gradient given by Eq.~(\ref{eqn_MLE_gradient}) does not minimize the excess work in a finite amount time, for some definition of convergence.

\section{A Learning Rate for General Potentials}
\label{sec_general}

In this section, we derive a generalization of the learning rate schedule given by Eq.~(\ref{eqn_optimal_learning_rate}), for the case of general energy functions. While the motivation for constructing this protocol was to prove that minimal dissipation learning is possible, the main motivation in this section is to provide a physical intuition for the quantities appearing in Eq.~(\ref{eqn_optimal_learning_rate}).  To do so we make the assumption of slow driving, that is, the model updates are much slower than the Langevin dynamical updates.  We leave the extension to the discontinuous protocol Eq.~(\ref{eqn_learning_rate_discontinuous}) for future study.

The excess work produced by an isothermal process can be calculated perturbatively in the total time of the process as is done in Ref.~\cite{sekimoto_Complementarity_1997, wadia2022solution}:
\begin{equation}
    \langle W \rangle_{p_\text{s}} = \Delta F_{eq} + \frac{1}{\tau} \int_{0}^{1} dt \, \dot{\boldsymbol{\theta}}^\text{T} \boldsymbol{\zeta}(\boldsymbol{\theta}) \, \dot{\boldsymbol{\theta}} + \mathcal{O}(\tau^{-2}),
\end{equation}
where $\boldsymbol{\zeta}$ is a symmetric positive-definite matrix referred to as the \textit{thermodynamic metric}.  Therefore, in the limit of slow driving, that is, $\tau \rightarrow \infty$, the gradient can be expressed as
\begin{equation}
\label{eqn_d_W_ex_dt}
    \nabla_{\boldsymbol{\theta}}\langle W_\text{ex} \rangle_{p_\text{s}} \approx \boldsymbol{\zeta}(\boldsymbol{\theta}) \, \dot{\boldsymbol{\theta}}.
\end{equation}
The matrix elements of $\boldsymbol{\zeta}$ can be determined by Green's functions of the Fokker--Planck equation  \cite{sekimoto_Complementarity_1997, wadia2022solution} or using linear response theory~\cite{sivak2012thermodynamic}.  

As is remarked in Ref.~\cite{sivak2012thermodynamic}, the metric $\boldsymbol{\zeta}$ can be decomposed as the Hadamard (i.e., element-wise) product 
\begin{equation}
\label{eqn_integral_relaxation_fisher}
    \boldsymbol{\zeta} = \beta^{-1} \left(\boldsymbol{\tau}^\text{r} \odot \boldsymbol{g} \right),
\end{equation}
where $\odot$ is the Hadamard product and $\boldsymbol{g}$ is the Fisher metric with the matrix elements
\begin{equation}
\label{eqn_fisher_metric}
    g_{ij} \equiv \left\langle \left( \frac{\partial \log p_\text{m}}{\partial \theta_i}\right)\left( \frac{\partial \log p_\text{m}}{\partial \theta_j}\right) \right\rangle_{p_\text{m}}.
\end{equation}
 Here, $\boldsymbol{\tau}^\text{r}$ is the \textit{integral relaxation time}, which is the characteristic correlation time between the thermodynamic forces $\nabla_{\boldsymbol{\theta}} \log p_\text{m}$~\cite{sivak2012thermodynamic}.

We generalize the learning rate schedule Eq.~(\ref{eqn_optimal_learning_rate}) by promoting $\eta$ from a scalar in Eq.~(\ref{eqn_approx_descent}) to the matrix $\boldsymbol{\eta}$ taking the form
\begin{align}
\label{eqn_eta_general}
    \boldsymbol{\eta}^{-1}(\boldsymbol{\theta}, t) 
    &= \beta^{-1}\Big(\tau - t + \boldsymbol{\tau}^{\text{r}}(\boldsymbol{\theta})\Big) \odot \boldsymbol{g}(\boldsymbol{\theta}).
\end{align}
Since $\boldsymbol{g}$ and $\boldsymbol{\zeta}$ are both symmetric and positive definite, $\boldsymbol{\eta}$ is also symmetric and positive definite.  

The following theorem shows that, if the learning rate schedule Eq.~(\ref{eqn_eta_general}) is used to train a persistent chain EBM, then the parameters are optimized according to a natural gradient flow, which has been proven useful in practical applications~\cite{amari1998natural}.
\begin{theorem}
Let $p_\mathrm{s}$ be the distribution of a system obeying the overdamped Langevin Eq.~(\ref{eqn_langevin}) initialized in equilibrium.  Let the parameters $\boldsymbol{\theta}(t)$ of the potential $E(\boldsymbol{x}, \boldsymbol{\theta}(t))$ be driven {\it slowly} according to the MLE gradient flow Eq.~(\ref{eqn_approx_descent}) for a time $\tau$ such that the approximation Eq.~(\ref{eqn_d_W_ex_dt}) always holds.  If the learning rate $\eta$ follows the time-dependent schedule Eq.~(\ref{eqn_eta_general}), then
\begin{equation}
\label{eqn_natural_gradient}
    \dot{\boldsymbol{\theta}} = -\frac{\beta}{\tau - t} \,\boldsymbol{g}^{-1} \, \nabla_{\boldsymbol{\theta}} \mathcal{L}_{\mathrm{MLE}},
\end{equation}
which is a natural gradient flow.
\end{theorem}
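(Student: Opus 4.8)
The plan is to eliminate the sample distribution from the training dynamics in favour of the excess work, and then to insert the explicit schedule Eq.~(\ref{eqn_eta_general}) and simplify to the claimed form. First I would promote the scalar flow Eq.~(\ref{eqn_approx_descent}) to the matrix learning rate, $\dot{\boldsymbol{\theta}} = -\boldsymbol{\eta}\,\nabla_{\boldsymbol{\theta}}\mathcal{L}_{\text{approx}}$, and split the approximate gradient through the bias definition Eq.~(\ref{eqn_bias_MLE_approx}) as $\nabla_{\boldsymbol{\theta}}\mathcal{L}_{\text{approx}} = \nabla_{\boldsymbol{\theta}}\mathcal{L}_{\text{MLE}} - \nabla_{\boldsymbol{\theta}}\mathcal{L}_{\text{bias}}$. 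The decisive input is that the bias gradient is exactly the thermodynamic force: by Eq.~(\ref{eqn_bias_def}), $\nabla_{\boldsymbol{\theta}}\mathcal{L}_{\text{bias}} = \langle\nabla_{\boldsymbol{\theta}}E\rangle_{p_{\text{s}}} - \langle\nabla_{\boldsymbol{\theta}}E\rangle_{p_{\text{m}}}$, which is the very quantity the slow-driving expansion Eq.~(\ref{eqn_d_W_ex_dt}) identifies with $\boldsymbol{\zeta}\,\dot{\boldsymbol{\theta}}$. The equilibrium initialization assumed in the hypothesis is precisely what licenses the linear-response form Eq.~(\ref{eqn_d_W_ex_dt}), so along the trajectory I may substitute $\nabla_{\boldsymbol{\theta}}\mathcal{L}_{\text{bias}} = \boldsymbol{\zeta}\,\dot{\boldsymbol{\theta}}$.

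Substituting this relation into the flow gives $\dot{\boldsymbol{\theta}} = -\boldsymbol{\eta}\big(\nabla_{\boldsymbol{\theta}}\mathcal{L}_{\text{MLE}} - \boldsymbol{\zeta}\,\dot{\boldsymbol{\theta}}\big)$, which I would rearrange into the closed linear system $(\boldsymbol{I} - \boldsymbol{\eta}\boldsymbol{\zeta})\,\dot{\boldsymbol{\theta}} = -\boldsymbol{\eta}\,\nabla_{\boldsymbol{\theta}}\mathcal{L}_{\text{MLE}}$, so that $\dot{\boldsymbol{\theta}} = -(\boldsymbol{I} - \boldsymbol{\eta}\boldsymbol{\zeta})^{-1}\boldsymbol{\eta}\,\nabla_{\boldsymbol{\theta}}\mathcal{L}_{\text{MLE}}$. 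It then remains to establish the purely algebraic identity $(\boldsymbol{I} - \boldsymbol{\eta}\boldsymbol{\zeta})^{-1}\boldsymbol{\eta} = \frac{\beta}{\tau - t}\,\boldsymbol{g}^{-1}$. The trick is to rewrite this prefactor as $[\boldsymbol{\eta}^{-1}(\boldsymbol{I} - \boldsymbol{\eta}\boldsymbol{\zeta})]^{-1} = (\boldsymbol{\eta}^{-1} - \boldsymbol{\zeta})^{-1}$ and to use the distributivity of the Hadamard product over the scalar shift $\tau - t$ in Eq.~(\ref{eqn_eta_general}), which factorizes $\boldsymbol{\eta}^{-1} = \beta^{-1}(\tau - t)\,\boldsymbol{g} + \beta^{-1}(\boldsymbol{\tau}^{\text{r}}\odot\boldsymbol{g})$; the second term is exactly $\boldsymbol{\zeta}$ by Eq.~(\ref{eqn_integral_relaxation_fisher}). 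Hence $\boldsymbol{\eta}^{-1} - \boldsymbol{\zeta} = \beta^{-1}(\tau - t)\,\boldsymbol{g}$, and inverting this single term yields $\frac{\beta}{\tau - t}\,\boldsymbol{g}^{-1}$, giving the natural gradient flow Eq.~(\ref{eqn_natural_gradient}).

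The hard part is not the final inversion, which is a one-line telescoping once the Hadamard factorization is spotted, but controlling the approximation that feeds it. The replacement $\nabla_{\boldsymbol{\theta}}\mathcal{L}_{\text{bias}} = \boldsymbol{\zeta}\,\dot{\boldsymbol{\theta}}$ is only the leading order of the slow-driving expansion, so I would state plainly that the hypothesis ``driven slowly such that Eq.~(\ref{eqn_d_W_ex_dt}) always holds'' is exactly the condition under which the $\mathcal{O}(\tau^{-2})$ remainder is dropped, and that the resulting Eq.~(\ref{eqn_natural_gradient}) is therefore an identity for the leading-order dynamics rather than the exact flow; I would also note that the validity of the thermodynamic-metric form Eq.~(\ref{eqn_d_W_ex_dt}) itself rests on the equilibrium start. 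The remaining matrix manipulations are unconditionally valid because $\boldsymbol{g}$, $\boldsymbol{\zeta}$, and hence $\boldsymbol{\eta}$ are symmetric positive definite and $\tau - t > 0$ on $(0,\tau)$, so $\boldsymbol{\eta}^{-1} - \boldsymbol{\zeta} = \beta^{-1}(\tau - t)\,\boldsymbol{g}$ remains invertible throughout the schedule.
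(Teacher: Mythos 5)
Your proposal is correct and follows essentially the same route as the paper: split $\nabla_{\boldsymbol{\theta}}\mathcal{L}_{\text{approx}}$ via the bias, replace $\nabla_{\boldsymbol{\theta}}\mathcal{L}_{\text{bias}}$ with $\boldsymbol{\zeta}\dot{\boldsymbol{\theta}}$ under slow driving, and cancel $\boldsymbol{\zeta}$ against the Hadamard decomposition of $\boldsymbol{\eta}^{-1}$ so that $\boldsymbol{\eta}^{-1}-\boldsymbol{\zeta}=\beta^{-1}(\tau-t)\boldsymbol{g}$. The only cosmetic difference is that you solve explicitly for $\dot{\boldsymbol{\theta}}$ through $(\boldsymbol{I}-\boldsymbol{\eta}\boldsymbol{\zeta})^{-1}\boldsymbol{\eta}=(\boldsymbol{\eta}^{-1}-\boldsymbol{\zeta})^{-1}$, whereas the paper leaves the relation in the form $\nabla_{\boldsymbol{\theta}}\mathcal{L}_{\text{MLE}}=-(\boldsymbol{\eta}^{-1}-\boldsymbol{\zeta})\dot{\boldsymbol{\theta}}$ and inverts at the end.
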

\begin{proof}
Substituting Eq.~(\ref{eqn_bias_MLE_approx}) into Eq.~(\ref{eqn_approx_descent}), we have
\begin{equation}
    \dot{\boldsymbol{\theta} }
    = -\boldsymbol{\eta} \nabla_{\boldsymbol{\theta}} \Big( \mathcal{L}_\text{MLE} - \mathcal{L}_\text{bias} \Big).
    \label{eqn_grad_bias_slow_driving}
\end{equation}
Under the assumptions of slow driving and initialization in equilibrium, the approximation Eq.~(\ref{eqn_d_W_ex_dt}) is valid, implying that
\begin{equation}
    \nabla_{\boldsymbol{\theta} }\mathcal{L}_\text{bias} = \nabla_{\boldsymbol{\theta}} \langle W_{\text{ex}} \rangle_{p_\text{s}} = \boldsymbol{\zeta} \dot{\boldsymbol{\theta}}.
\end{equation} 
Substituting this equation into Eq.~(\ref{eqn_grad_bias_slow_driving}) and rearranging gives
\begin{equation}
\nabla_{\boldsymbol{\theta}} \mathcal{L}_\text{MLE}
= -(\boldsymbol{\eta}^{-1} - \boldsymbol{\zeta})\dot{\boldsymbol{\theta}}.
\label{eqn_theta_dot_proof}
\end{equation}
Finally, substituting Eqs.~(\ref{eqn_integral_relaxation_fisher}) and~(\ref{eqn_eta_general}) into Eq.~(\ref{eqn_theta_dot_proof}) yields
\begin{equation}
\nabla_{\boldsymbol{\theta}} \mathcal{L}_\text{MLE} = -\beta^{-1} \left(\tau - t \right) \boldsymbol{g} \,\dot{\boldsymbol{\theta}},  \label{eqn_omega_g_theta_dot}
\end{equation}
which implies Eq.~(\ref{eqn_natural_gradient}).
\end{proof}
In Ref.~\cite{fujiwara1995gradient}, it is shown that for dually flat statistical manifolds, a natural gradient flow of the MLE objective always converges to the global objective in a finite time.  The standard dually flat statistical manifolds are the exponential family and the mixture family~\cite{amari2016information}.  The harmonic trap is a member of the \textit{exponential family}, which is the set of Boltzmann distributions given by Eq.~(\ref{eqn_boltzmann}) having energy functions of the form
\begin{equation}
\label{eqn_exponential_family}
    E(\boldsymbol{x}, \boldsymbol{\theta}) = C(\boldsymbol{x}) + \boldsymbol{\theta} \cdot \boldsymbol{F}(\boldsymbol{x}),
\end{equation}
where $C$ and $\boldsymbol{F}$ are arbitrary functions.

Further, in Ref.~\cite{fujiwara1995gradient}, it is shown that the trajectories of Eq.~(\ref{eqn_natural_gradient}) for dually flat manifolds are geodesics with respect to the Chentsov--Amari connection of the Fisher metric.  This means that, in order for the trajectories of Eq.~(\ref{eqn_natural_gradient}) to minimize excess work, they must also be geodesics with respect to the Levi--Civita connection of the thermodynamic metric $\boldsymbol{\zeta}$. In the section that follows, we show how the geodesics coincide in the case of the harmonic trap.  

\subsection{A Slow Driving Protocol for the Harmonic Trap}

In this section, we use Eq.~(\ref{eqn_eta_general}) for the harmonic trap to compare its excess work to that of other learning rate schedules. To this end, the integral relaxation time $\tau^{\text{r}}$ and the Fisher metric $\boldsymbol{g}$ are required.  These matrices can be inferred from Eq.~(\ref{eqn_integral_relaxation_fisher}), given the thermodynamic metric $\boldsymbol{\zeta}$, which is a well-known mathematical result for the harmonic trap~\cite{sivak2012thermodynamic, wadia2022solution}:
\begin{equation}
\label{eqn_zeta_mu}
    \zeta = \frac{1}{\mu}.
\end{equation}
The Fisher metric can be calculated using Eq.~(\ref{eqn_fisher_metric}) as
\begin{equation}
\label{eqn_g_beta}
    g = \beta.
\end{equation}
It follows that Eq.~(\ref{eqn_integral_relaxation_fisher}) implies the integral relaxation time is
\begin{equation}
\label{eqn_tau_c_mu}
    \tau^\text{r} = \frac{1}{\mu}.
\end{equation}
Note that substituting Eqs.~(\ref{eqn_g_beta}) and~(\ref{eqn_tau_c_mu}) into Eq.~(\ref{eqn_eta_general}) gives the optimal learning rate for the continuous protocol Eq.~(\ref{eqn_optimal_learning_rate}).  

The difference between the protocols Eq.~(\ref{eqn_optimal_learning_rate}) and Eq.~(\ref{eqn_eta_general}) is that the latter is initialized in equilibrium, whereas the former must be initialized in a precise, out-of-equilibrium state which requires prior knowledge of the value of $\theta^\ast$.

Since the thermodynamic metric Eq.~(\ref{eqn_zeta_mu}) is constant, the Levi--Civita geodesics for the parameter $\theta(t)$ are straight lines.  Note that this is not equivalent to, and in this case more general than, the condition specified by Eq.~(\ref{eqn_u_ddot_equals_zero}). Substituting $\theta(t) = \theta_0 + mt$ into Eq.~(\ref{eqn_u_eom}) and solving the resulting differential equation yields
\begin{equation}
\label{eqn_u_of_t_general}
    u(t) = \frac{m}{\mu}e^{-\mu t} + \theta_0 - \frac{m}{\mu} + mt,
\end{equation}
which is more general than Eq.~(\ref{eqn_u_of_t}). In fact, since Eq.~(\ref{eqn_u_of_t_general}) does not satisfy Eq.~(\ref{eqn_u_ddot_equals_zero}), it does not minimize the total entropy production, due to the slow driving approximation. However, since the initialization in equilibrium does not depend on $\theta^\ast$, it is fair to say that $\theta^\ast$ is genuinely learned.

In Fig.~\ref{fig_harmonic_trap_thermo}, the training of the harmonic trap is illustrated for the slow diving protocol Eq.~(\ref{eqn_u_of_t_general}). Note that the sample distribution (shown using blue ridge plots) indeed does not follow a straight path whereas the model parameter (indicated by a red curve) does.
\begin{figure}[h]
  \includegraphics[width=0.48\textwidth]{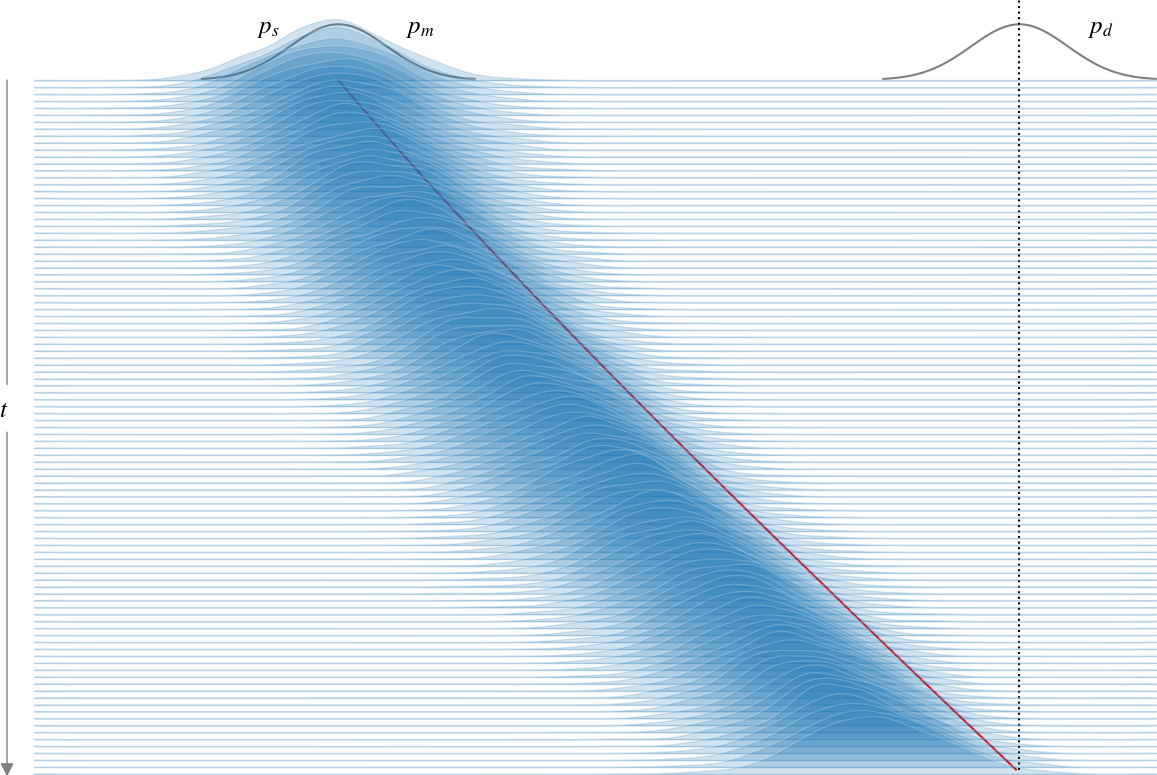}
\caption{The harmonic trap EBM is trained using the slow driving protocol Eq.~(\ref{eqn_eta_general}), which follows geodesics with respect to the thermodynamic metric.  The parameters chosen are $\theta_0=-10$, $\theta^\ast=0$, $u_0=-10$, $\mu=5$, $\beta=1$, and $\tau=1$. }
\label{fig_harmonic_trap_thermo}
\end{figure}

The excess work for the slow driving protocol Eq.~(\ref{eqn_eta_general}) can be calculated using Eq.~(\ref{eqn_excess_work_theta_minus_u}) to be
\begin{equation}
\label{eqn_excess_work_slow}
    \langle W_\text{ex}^{\text{slow}} \rangle_{p_\text{s}} = \frac{m^2 \tau}{\mu} - \frac{m^2}{\mu^2}(1 - e^{-\mu \tau}).
\end{equation}
The first term on the right-hand side is the excess work of the continuous (i.e., minimal-entropy) protocol in Eq.~(\ref{eqn_theta_of_t}).  The second term is strictly negative, and so it decreases the excess work, meaning that the slow driving protocol, at least in the case of the harmonic trap, always produces less excess work than the continuous protocol.  Taking the ratio of the left-hand and right-hand sides of Eqs.~(\ref{eqn_excess_work_slow}) and~(\ref{eqn_W_ex_discontinuous}) gives
\begin{equation}
    \frac{\langle W_\text{ex}^\text{slow} \rangle}{\langle W_\text{ex}^\text{opt} \rangle} = \frac{\mu\tau + 2}{(\mu\tau)^2}\left(\mu \tau - 1 + e^{-\mu\tau}\right),
\end{equation}
which is always greater than $1$ for $0 < \mu \tau < \infty$.  Therefore, the excess work produced by the slow driving protocol is always greater than that of the discontinuous (i.e., optimal) protocol.  Figure~\ref{fig_excess_work} shows a plot of the cumulative excess work versus time for each of the protocols we consider for some chosen parameters.
\begin{figure}[h]
  \includegraphics[width=0.48\textwidth]{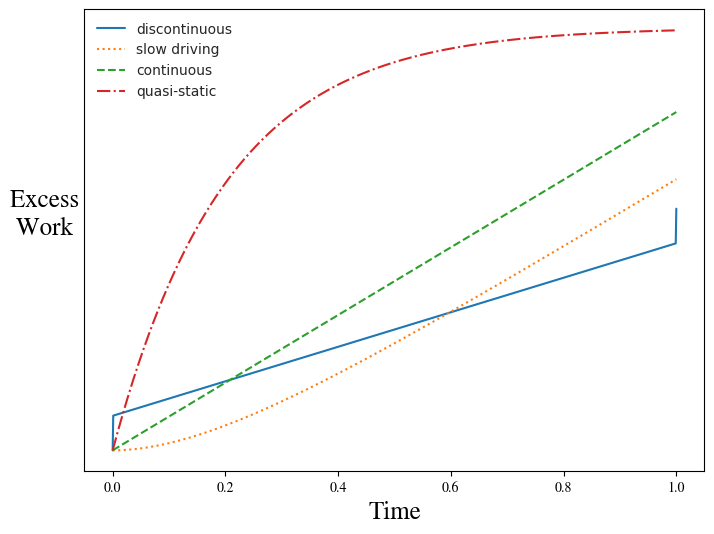}
\caption{Cumulative excess work for each of the protocols plotted versus the normalized time $t/\tau$.  The protocols correspond to the figures above as follows: discontinuous to Fig.~\ref{fig_harmonic_trap_optimal_jumps}, slow driving to Fig.~\ref{fig_harmonic_trap_thermo}, continuous to Fig.~\ref{fig_harmonic_trap_optimal}, and quasi-static to Fig.~\ref{fig_harmonic_trap_quasi_static}.  The parameters used are $\theta^\ast - \theta_0 = 1$, $\mu=1$, and $\tau=5$.  For the quasi-static protocol, $\eta=0.5$.}
\label{fig_excess_work}
\end{figure}
%
\section{Discussion}

By viewing persistent chain EBMs~\cite{nijkamp2020anatomy} through the lens of stochastic thermodynamics, we have found that the bias of the MLE objective is exactly equal to the thermodynamic excess work. This excess work, and hence the bias, is known to be minimized by a quasi-static process, that is, one for which the sample distribution is always close to equilibrium; however, the excess work will vanish only in the limit of infinite time. For a discrete simulation of such a process, this limit corresponds to performing a very large number of MCMC iterations, which would be computationally impractical.

This led to the question of whether it is possible to train an EBM that produces a minimal amount of excess work in a finite amount of time. Furthermore, to respect the MLE objective, we assumed that the only controllable parameter is the learning rate. We showed that training a harmonic trap EBM in a finite amount of time with minimal excess work is indeed possible. We found that, if the protocol for the model parameter is required to be continuous, then it is not possible to learn the target distribution, since doing so requires precise initial conditions that depend on the target distribution. However, if the protocol for the model parameter is allowed to have discontinuities at the initial and final points as in Ref.~\cite{schmiedl2007optimal}, we found that the target distribution can be learned if  the sample distribution is initialized in equilibrium.  We leave the extension of our results to the application of optimal control and statistical physics in supervised machine learning~\cite{mignacco2025statistical} to future study.

In addition, we proposed a generalization of minimal-dissipation learning rate schedules for a harmonic trap to general energy potentials. This generalization is based on the thermodynamic metric proposed in Ref.~\cite{sivak2012thermodynamic}, and while it does minimize excess work for the harmonic trap, we did not determine the broader class of distributions for which excess work is minimized. Interestingly, we found that it induces a natural gradient flow on the MLE objective, which is a well-known second-order optimization method~\cite{martens2020new}.  We leave the determination of the class of distributions which learn with minimal dissipation using this learning rate schedule for future work. We note that for multi-modal distributions, the slow driving approximation can be violated if the relaxation time diverges, and we leave this this for future investigations.

We note that the continuous-time Langevin dynamical systems considered here correspond to ideal thermodynamic systems, and the excess work that they produce is a lower bound on the energy dissipated by any physical device used to simulate them. For the discretized algorithms of Langevin dynamics, the thermodynamic excess work must also take into account additional degrees of freedom, such as memory, which is an avenue of research we leave for future research. 

\section*{Acknowledgements}

We thank Marko Bucyk for his careful review and editing of the manuscript. We thank Pooya Ronagh, Mark Schmidt, Sayonee Ray, Alan Milligan, Enrico Olivucci, Maximillian Puelma Touzel, and Yvonne Geyer for useful discussions. We also thank the referees for the careful reading and useful suggestions. This work was supported by the National Research Council of Canada (NRC) under AQC-206-2.

\bibliography{apssamp}

\end{document}